 \DeclarePairedDelimiter{\abs}{\lvert}{\rvert}
 \theoremstyle{definition}
 \newtheorem{thm}{Theorem}
 \newtheorem{lem}{Lemma}
 \newtheorem{define}{Definition}
 \newcommand{\no}{\nonumber}
\begin{document}
	\title{Asymptotic Privacy Loss due to Time Series
Matching of Dependent Users}	
	\author{\IEEEauthorblockN{Nazanin Takbiri*}
		\IEEEauthorblockA{Electrical and\\Computer Engineering\\
			UMass-Amherst\\
			ntakbiri@umass.edu\thanks{Nazanin Takbiri and Minting Chen contributed equally to this work. This work was supported by the National Science Foundation under grants CCF--1421957 and CNS--1739462.}}
		\and
		\IEEEauthorblockN{Minting Chen*}
		\IEEEauthorblockA{Electrical and\\Computer Engineering\\
			UMass-Amherst\\
			mintingchen@umass.edu }
		\and
		\IEEEauthorblockN{Dennis L. Goeckel}
		\IEEEauthorblockA{Electrical and\\Computer Engineering\\
			UMass-Amherst\\
			goeckel@ecs.umass.edu}
		\and
		\IEEEauthorblockN{Amir Houmansadr}
		\IEEEauthorblockA{Information and \\Computer Sciences\\
			UMass-Amherst\\
			amir@cs.umass.edu}
		\and
		\IEEEauthorblockN{Hossein Pishro-Nik}
		\IEEEauthorblockA{Electrical and\\Computer Engineering\\
			UMass-Amherst\\
			pishro@ecs.umass.edu}
	}

\maketitle

\begin{abstract}	
The Internet of Things (IoT) promises to improve user utility by tuning applications to user behavior, but revealing the characteristics of a user's behavior presents a significant privacy risk. Our previous work has established the challenging requirements for anonymization to protect users' privacy in a Bayesian setting in which we assume a powerful adversary who has perfect knowledge of the prior distribution for each user's behavior. However, even sophisticated adversaries do not often have such perfect knowledge; hence, in this paper, we turn our attention to an adversary who must learn user behavior from past data traces of limited length. We also assume there exists dependency between data traces of different users, and the data points of each user are drawn from a normal distribution. Results on the lengths of training sequences and data sequences that result in a loss of user privacy are presented.
 %we have established the privacy requirements for different scenarios, but applying the previous results to the IoT reveals some gaps, which we would like to remedy such. 

%In this paper, we consider a more realistic case in which the adversary knows (i) non-perfect knowledge about users' past behavior, (ii) the dependency between data traces of different users, and (iii) the data trace of each user has a normal distribution, which is a promising substitute to the real data in many scenarios. In this more realistic case, we establish the requirements on anonymization to break users' privacy.
\end{abstract}

\begin{IEEEkeywords}
Anonymization, information theoretic privacy, inter-user dependency, Internet of Things (IoT), Privacy-Protection Mechanisms (PPM).
\end{IEEEkeywords}

%\IEEEpeerreviewmaketitle

\section{Introduction}
\label{intro}

The Internet of Things (IoT) allows users to share and access information on a large scale, but the IoT also comes with a significant threat to users' privacy: leakage of sensitive information~\cite{ukil2014iot}. There are two main approaches to augment privacy for IoT users: identity perturbation and data perturbation. Identity perturbation (or anonymization) is the removal of the identifying information from a set of data to protect privacy~\cite{hoh2005protecting,freudiger2007mix}, whereas data perturbation (or obfuscation) is the process of adding noise to the data~\cite{shokri2012protecting}. A cost for employing these Privacy-Protection Mechanisms (PPMs) is a reduction in utility and efficiency of the user data; therefore, optimizing the level of PPMs is of great interest.

In~\cite{Naini2016, bordenabe2014optimal}, a comprehensive analysis of the asymptotic (in the length of the time series) optimal matching of time series to source distributions is presented in a non-Bayesian setting, where the number of users is a fixed, finite value. In contrast, we have adopted a Bayesian setting in%~\cite{tifs2016, Nazanin_IT,takbiri2018privacy, nazanin_ISIT2017, nazanin_ISIT2018}, 
~\cite{tifs2016, Nazanin_IT,takbiri2018privacy, nazanin_ISIT2018}, where a powerful adversary is assumed to have accurate prior distributions for user behavior through past observations or other sources. We consider the length of observations available to the adversary that guarantee privacy, or, conversely, the length of observations for which privacy is compromised~\cite{tifs2016, Nazanin_IT,takbiri2018privacy, Nazanin_CISS2019}. In \cite{takbiri2018privacy}, our most significant results are converse results that demonstrate that this powerful adversary can exploit correlations between the data of different users to compromise user privacy. Thus, a limitation of the converse results of \cite{takbiri2018privacy} is that they are predicated on a very powerful adversary, which, while desirable for (forward) results that guarantee privacy, should be relaxed if possible for (converse) results that demonstrate the loss of privacy. Our main contribution in this letter is to resolve this limitation by developing converse results assuming that the adversary does not have perfect knowledge of the statistics of users' behavior but rather a set of data containing past behavior for each user, from which the adversary can learn user characteristics. 

An initial investigation in~\cite{KeConferance2017} was restricted to obtaining the necessary conditions for breaking privacy for a finite number of users. In contrast, here we turn our attention to this problem in the most general setting of our prior work with an asymptotically large number of users~\cite{tifs2016, Nazanin_IT,takbiri2018privacy, Nazanin_CISS2019}. In particular, contrary to~\cite{KeConferance2017,Nazanin_CISS2019}, we allow for inter-user correlation as in \cite{takbiri2018privacy}. Furthermore, we bring our results closer to practice by, rather than presuming the user's data is discrete-valued \cite{tifs2016, Nazanin_IT,takbiri2018privacy, Nazanin_CISS2019}, considering a Gaussian model for users' data, as Gaussian distributed data has been considered in various domains, e.g., sensor networks~\cite{xiao2005scheme} and distributed consensus~\cite{wagner2004resilient}, as a promising substitute to real data. Cullina et al. have also investigated the related problem of database alignment in a different framework, in which the conditions for exact recovery of the correspondence between database entries have been obtained~\cite{Negar}. 	
The rest of the paper is organized as follows. In Section~\ref{frame}, we present the system model, metrics, and definitions. Then, we present the construction and analysis in Section~\ref{analysis}, and in Section~\ref{conclusion}, we draw conclusions.

%%%%%%%%%%%%%%%%%%%%%%%%%%%%%%%%%%%%%%%%%%%%%
%\input{relatedwork}
%%%%%%%%%%%%%%%%%%%%%%%%%%%%%%%%%%%%%%%%%%%%%%%%%%%%

\section{Framework}
\label{frame}

Define a system with $n$ users, where each user creates a series of $m$ data points, and the adversary seeks to identify users based on this collection of data points.  Let $X_u(k)$ be the data point of user $u$ at time $k$.  The vectors $\textbf{X}_u$ will be termed the ``actual data set'': $$\textbf{X}_u = [X_u(1), X_u(2),\cdots,X_u(m)]^T, \ \  u \in \{1,2,\cdots, n\}.$$
%$$\textbf{X} = [ \textbf{X}_1,  \textbf{X}_2,\cdots, \textbf{X}_n],$$
For every user, there is also a series of $l$ data points representing the user's past behavior; we term these vectors $\textbf{W}_u$ the ``Learning Data Set'':  
$$\textbf{W}_u = [W_u(1), W_u(2),\cdots,W_u(l)]^T, \ \ u \in \{1,2,\cdots, n\}.$$
%$$\textbf{W} = [ \textbf{W}_1,  \textbf{W}_2,\cdots, \textbf{W}_n].$$
For $k \in \{1, 2, \cdots, m\}$ and $k' \in \{1, 2, \cdots, l\}$, ${X}_u(k)$ and ${W}_u(k')$ are drawn from a user-specific probability distribution.  In particular, we assume that the points in the data sets of a given user user are drawn from a normal distribution $N(\mu_u,\sigma^2)$, where $\mu_u$ is the mean of the data of user $u$ and $\sigma^2$ is its variance.  While the $\mu_u$'s are unknown to the adversary, each of them is drawn independently from a continuous density function $f_\mu(x)$.  We assume the mild technical condition that there exists $\delta > 0$ such that $f_\mu(x)<\delta$ for all $x$.  Further, the points in the two data sets ${X}_u(k)$ and ${W}_u(k')$ are drawn independently from those in the other set, and, within each set, independently across index ($k$ or $k'$), although there may be inter-user correlation as described below.   %hence giving the adversary the potential to identify users by matching the statistical characteristics of traces.  

In order to protect the privacy of the users, anonymization is employed as a PPM that conceals the mapping between the learning data set and the actual data set by using a random permutation function $(\Pi)$. The result of permuting $\textbf{X}_u$ yields the ``observed data set'': $$\textbf{Y}_u = [Y_u(1), Y_u(2),\cdots,Y_u(m)]^T,  \ \ u \in \{1,2,\cdots, n\},$$ 
%$$\textbf{Y} = [ \textbf{Y}_1,  \textbf{Y}_2,\cdots, \textbf{Y}_n].$$
where each ${Y}_u(k)$ has a normal distribution $N(\mu_{\Pi^{-1}(u)},{\sigma^2});$ $\mu_{\Pi^{-1}(u)}$ is the mean of the trace in the actual data set that gets mapped to the $u^{th}$ position in the observed data set by the permutation.  Thus, we have $\textbf{Y}_u = \textbf{X}_{\Pi^{-1}(u)}$ and $\textbf{Y}_{\Pi(u)} = \textbf{X}_u$.  Figure \ref{fig:learning} shows the relation of the three data sets.

\begin{figure}[h]
	\centering
	\includegraphics[width =1\linewidth]{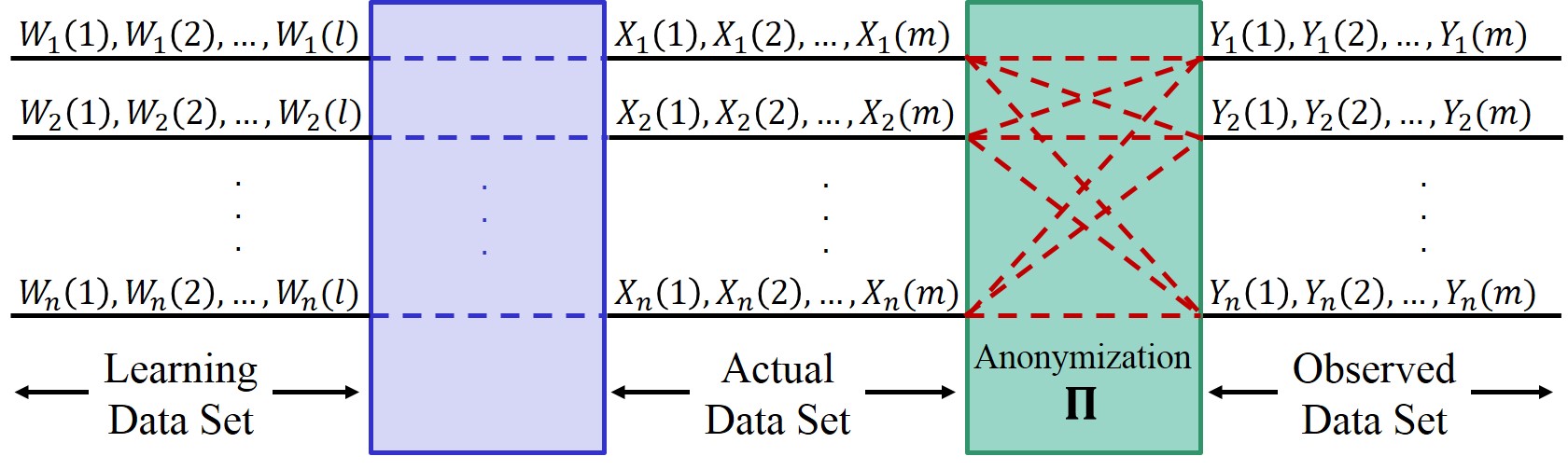}
	\caption{The goal of the adversary: match each sequence in the learning data set $\textbf{W}_u$, $u=1,2,\ldots,n$, to a sequence in the observed data set $\textbf{Y}_u$, $u=1,2,\ldots,n$.}
	\label{fig:learning}
\end{figure}

%{\hspace{-0.18 in}\textbf{Data Points Model:}}????

{\hspace{-0.14 in}\textbf{Association Graph:}} 
The dependencies between users are
modeled by an association graph $G(\mathcal{V}, E)$, where $\mathcal{V}$ represents the nodes and $E$ represents the edges. In this graph, two users are connected if they are dependent. More specifically, we assume 
\begin{itemize}
	\item $(u,u')\in E$ if and only if $Cov_{uu'} > 0$,
	\item $(u,u')\notin E$ if and only if $Cov_{uu'} = 0$,
\end{itemize} 
where $Cov_{uu'}$ is the covariance of the data of user $u$ and user $u'$ at any given time.
\begin{figure}
	\centering
	\includegraphics[width=0.7\linewidth]{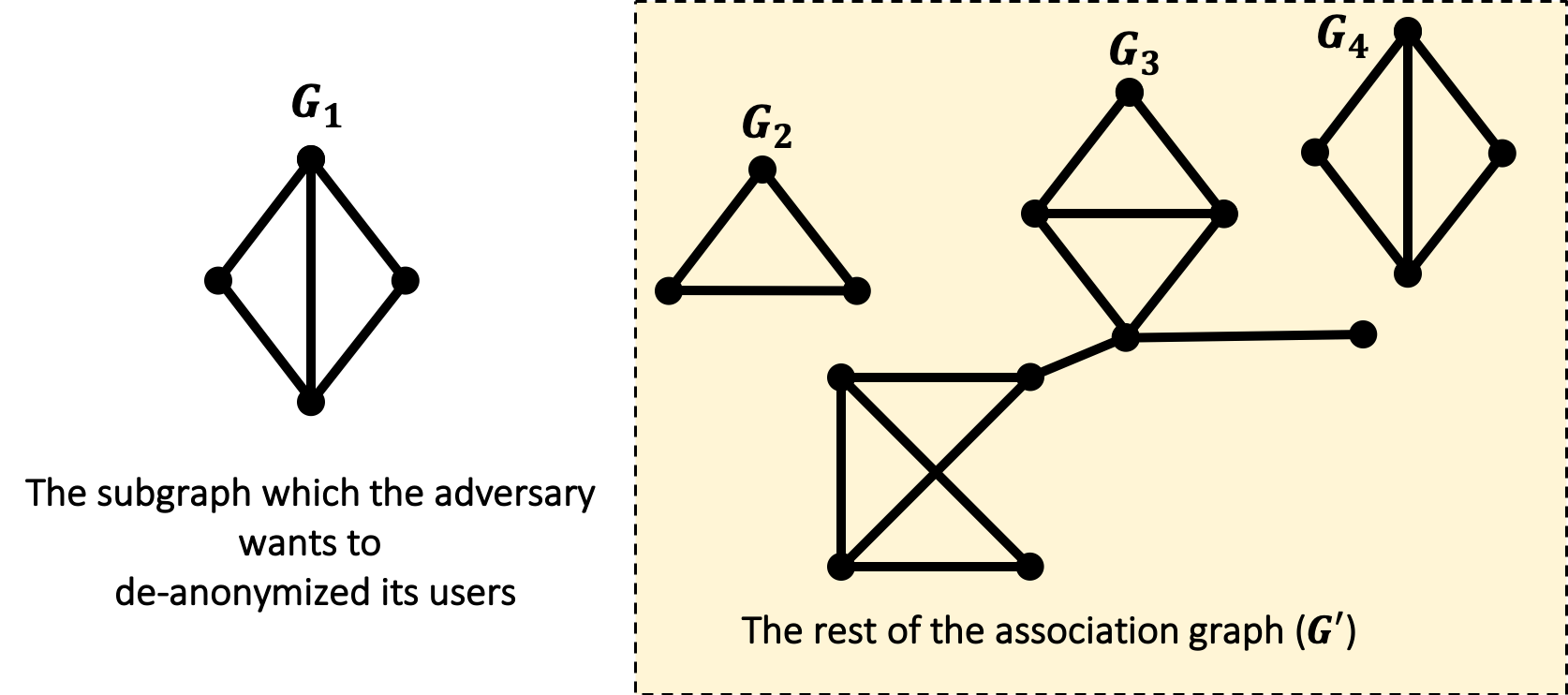}
	\centering
	\caption{The structure of the association graph $(G)$: Group $g$ with $s_g$ vertices is disjoint from the remainder of the association graph $(G')$.}
	\label{fig:graph}
\vspace*{-0.15in}
\end{figure}

\hspace{-0.14 in}\textbf{Adversary Model}: 
%In the previous work~\cite{tifs2016,  Nazanin_IT,nazanin_ISIT2017, takbiri2018privacy,nazanin_ISIT2018}, we assume a strong adversary who has accurate prior distributions for user behavior through past observations or other sources; in particular, they know the multivariate normal distribution from which the data points of users are drawn.
%The adversary wants to perform statistical matching between the set of the exact prior behavior of users $\big\{\mu_1, \mu_2, \cdots, \mu_n\big\}$ and the observed data set $\big\{\textbf{Y}_u, u = 1,2,\cdots,n\big\}$.
%However, in practice this is not the case, as the adversary's knowledge is a noisy
%version of the exact statistical model as might be obtained from a learning set of limited length. Thus, in this paper, we assume 
The adversary performs statistical matching between the learning data set $\big\{\textbf{W}_u, u = 1,2,\cdots,n \big\}$ and the observed data set $\big\{\textbf{Y}_u, u = 1,2,\cdots,n\big\}$ to match traces in the former, which contains identifying information, with traces in the latter.  We assume the adversary knows the structure of the association graph $G(\mathcal{V}, E)$.  The adversary also has knowledge of the anonymization mechanism (i.e. that a random permutation is employed), but not the realization of the random permutation.

	We define a user having no privacy as~\cite{Nazanin_IT}:
	\begin{define}
		User $u$ has \emph{no privacy} at time $k$ if there exists an algorithm for the adversary to estimate $X_u(k)$ perfectly as $n$ goes to infinity. In other words, as $n \rightarrow \infty$,
		\begin{align}
		\no \forall k\in \mathbb{N}, \ \ \ \mathbb{P}_e(u)\triangleq \mathbb{P}\left(\widehat{X}_u(k) \neq X_u(k)\right)\rightarrow 0,
		\end{align}
		where $\widehat{X}_u(k)$ is the adversary's estimate of $X_u(k)$. 
	\end{define}

\section{Impact Of Employing Training Data On Privacy Using Anonymization}
\label{analysis}

%Note, the adversary's knowledge is a noisy version of the exact statistical model, as it might be obtained from a learning set of limited length. The adversary wants to perform statistical matching between the learning set $\big\{\textbf{W}_u, u = 1,2,\cdots,n \big\}$ and the observed data set $\big\{\textbf{Y}_u, u = 1,2,\cdots,n\big\}$, which is drawn from the same probability distribution (as shown in Figure ~\ref{fig:learning}). 

The proof of our key result incorporating learning data sets follows the same three steps as that in \cite{takbiri2018privacy} for the case that the adversary has perfect knowledge of the statistical behavior of the users.  However, because we need to employ learning data and the data points are drawn from Gaussian distributions, there are technical challenges in the second two steps, as illustrated below. 

In the first step, we consider the ability of the adversary to fully reconstruct the structure of the association graph of the anonymized version of the data.
\begin{lem}
	If for any $\lambda>0$, the adversary obtains $m=n^{\lambda}$ points in the observation data set, they can reconstruct $\widetilde{G}=\widetilde{G}(\widetilde{\mathcal{V}}, \widetilde{E})$, where $\widetilde{\mathcal{V}}=\{\Pi(u):u \in \mathcal{V}\}=\mathcal{V}$, such that with high probability, for all $u, u' \in \mathcal{V}$; $ (u,u')\in E$ iff $\left(\Pi(u),\Pi(u')\right)\in \widetilde{E}$. We write this statement as $\mathbb{P}(\widetilde{E}=E)\to 1$.	\label{lemma1}
\end{lem}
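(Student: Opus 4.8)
The plan is to reduce the graph-reconstruction problem to a family of pairwise dependence tests on the observed sequences, and then to make the union bound over all pairs feasible by exploiting exponential-in-$m$ concentration of Gaussian empirical covariances. First I would observe that anonymization only relabels traces: since $\textbf{Y}_{\Pi(u)}=\textbf{X}_u$, the observed sequences at positions $\Pi(u)$ and $\Pi(u')$ are statistically dependent if and only if $\textbf{X}_u$ and $\textbf{X}_{u'}$ are, i.e. if and only if $(u,u')\in E$. Hence it suffices to equip the adversary with a rule that, for each unordered pair $(a,b)$ of observed positions, decides dependence from the two length-$m$ traces $\textbf{Y}_a,\textbf{Y}_b$, and to show that all $\binom{n}{2}$ decisions are simultaneously correct with probability tending to $1$; the reconstructed $\widetilde{E}$ is then exactly the $\Pi$-image of $E$, which is the claim.

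For the decision rule I would use the empirical covariance (equivalently, since the marginal variance $\sigma^2$ is common, the empirical correlation)
\begin{equation}
\no \widehat{C}_{ab}=\frac{1}{m}\sum_{k=1}^{m}\bigl(Y_a(k)-\bar{Y}_a\bigr)\bigl(Y_b(k)-\bar{Y}_b\bigr),
\end{equation}
and declare $(a,b)\in\widetilde{E}$ precisely when $\widehat{C}_{ab}>\tau$. Two features make this work. Centering by the sample means $\bar{Y}_a,\bar{Y}_b$ cancels the unknown user means $\mu_u$, so $\widehat{C}_{ab}$ depends only on the zero-mean noise and its behavior does not involve $f_\mu$. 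And by construction the population value of $\widehat{C}_{ab}$ equals $\cov_{\Pi^{-1}(a)\Pi^{-1}(b)}$, which is either $0$ (non-edge) or bounded below by $\rho_{\min}=\inf\{\cov_{uu'}:\cov_{uu'}>0\}$ (edge). Choosing $\tau=\rho_{\min}/2$, a correct decision for the pair $(a,b)$ follows whenever $|\widehat{C}_{ab}-\mathbb{E}[\widehat{C}_{ab}]|<\tau$.

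The core estimate is then a concentration bound. Because the centered traces are jointly Gaussian and independent across the $m$ time indices, $\widehat{C}_{ab}$ is a normalized sum of $m$ i.i.d. products of correlated Gaussians, each of which is sub-exponential. A Bernstein- or Hanson--Wright-type inequality gives, for the fixed deviation $\tau$, a bound of the form $\mathbb{P}\bigl(|\widehat{C}_{ab}-\mathbb{E}[\widehat{C}_{ab}]|\ge\tau\bigr)\le 2\exp(-c\,m)$ with $c=c(\tau,\sigma^2)>0$, where the substitution of sample means for true means contributes only an $O(1/m)$ perturbation absorbed into $c$. A union bound over the $\binom{n}{2}<n^2$ pairs yields $\mathbb{P}(\widetilde{E}\neq E)\le 2n^2\exp(-c\,m)$. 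Setting $m=n^{\lambda}$, the right-hand side is $2n^2\exp(-c\,n^{\lambda})\to 0$ as $n\to\infty$ for every fixed $\lambda>0$, since $n^{\lambda}$ eventually dominates $2\log n+\log 2$; this is exactly $\mathbb{P}(\widetilde{E}=E)\to 1$.

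I expect the main obstacle to be the concentration step. Products of correlated Gaussians are only sub-exponential rather than sub-Gaussian or bounded, so the deviation inequality must be used in its Bernstein regime, and I must verify that the constant $c$ can be chosen uniformly over all pairs (the relevant covariances range over $\{0\}\cup[\rho_{\min},\sigma^2]$) and that the sample-mean centering genuinely costs only lower-order terms. A secondary, structural subtlety is the gap condition $\rho_{\min}>0$ implicit above: a fixed threshold $\tau$ can separate edges from non-edges only if the positive covariances are bounded away from zero. If instead $\rho_{\min}\to 0$ with $n$, then $\tau$ and hence $c$ would shrink, and the union bound would force a correspondingly larger exponent $\lambda$; making this dependence explicit is the delicate part of turning the sketch into a complete argument.
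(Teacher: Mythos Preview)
Your argument is correct in substance. The paper itself does not give a self-contained proof of this lemma: it simply observes that graph reconstruction does not require knowledge of the user means $\mu_u$ and then invokes \cite[Lemma~1]{takbiri2018privacy} and \cite[Lemma~1]{WCNC2019} as black boxes. Your empirical-covariance-thresholding scheme, followed by sub-exponential concentration for products of correlated Gaussians and a union bound over the $\binom{n}{2}$ pairs, is exactly the kind of argument those references carry out, so you have effectively reconstructed the underlying proof rather than taken a different route.

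Two remarks on the caveats you raise. First, your identification of the gap condition $\rho_{\min}=\inf\{\cov_{uu'}:\cov_{uu'}>0\}>0$ is on point: the cited results implicitly assume a fixed association graph with covariances bounded away from zero, so a constant threshold $\tau$ and a constant exponent $c$ are legitimate; this is not a defect in your proof but an unstated model assumption that the paper inherits from its references. Second, your concern about sample-mean centering is real but benign: replacing $\mu_u$ by $\bar Y_u$ introduces a bias term of order $\sigma^2/m$ plus a fluctuation whose tail is again sub-exponential in $m$, so both are absorbed into the same Bernstein-type bound after adjusting constants. With those two points acknowledged, your sketch is complete and matches what the paper relies on by citation.
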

\begin{proof}
The reconstruction of the association graph does not require the adversary's knowledge about user statistics (i.e., the values of $\mu_u$'s)~\cite[Lemma 1]{takbiri2018privacy}. Thus, according to the result of ~\cite[Lemma 1]{WCNC2019}, the adversary is able to fully reconstruct the structure of the association graph of the anonymized version of the data with arbitrarily small error probability independent of the length of the learning data set.
\end{proof}
Without loss of generality, assume that User 1 belongs to Group 1 of size $s$.  In contrast to \cite{takbiri2018privacy}: (i) the data points are drawn from a Gaussian distribution; and, more importantly, (ii) the adversary does not know the statistics of the users in Group 1, but rather only has the learning data sets for those users.  In the next step, we demonstrate how the adversary can identify Group $1$ among all of the groups given sufficiently long data traces. 
\begin{lem}
	If for any $\alpha, \alpha' >0$, the adversary obtains learning data sets containing $l=n^{\frac{2}{s}+\alpha'}$ data points of past behavior for each user, and observation data sets containing $m=n^{\frac{2}{s}+\alpha}$ data points for each user, and knows the structure of the association graph, they can identify the traces in the observation data set that correspond to users in Group $1$ with arbitrarily small error probability.
	\label{lemma2_2}
\end{lem}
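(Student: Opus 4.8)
The plan is to match groups by comparing empirical \emph{signatures} built from the sample means of their members, computed separately from the learning and observation data. Assume without loss of generality that Group~$1$ is the set of users $\{1,\ldots,s\}$. For every user $u$ I would form the per-trace sample means $\bar W_u=\frac1l\sum_{k'=1}^{l}W_u(k')$ and $\bar Y_u=\frac1m\sum_{k=1}^{m}Y_u(k)$; by construction $\bar W_u$ estimates $\mu_u$ while $\bar Y_u$ estimates $\mu_{\Pi^{-1}(u)}$. By Lemma~\ref{lemma1} the reconstructed graph $\widetilde G$ exposes every size-$s$ group in the observed data, and the image $\widetilde g^{\star}=\{\Pi(1),\ldots,\Pi(s)\}$ of Group~$1$ is one of these candidates. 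I would define the signature of Group~$1$ as the sorted vector $\boldsymbol\phi^{W}=\mathrm{sort}(\bar W_1,\ldots,\bar W_s)$ and, for each candidate size-$s$ group $\widetilde g$, the signature $\boldsymbol\phi^{Y}_{\widetilde g}=\mathrm{sort}\{\bar Y_u:u\in\widetilde g\}$, and then declare the match to be $\widehat g=\arg\min_{\widetilde g}\big\|\boldsymbol\phi^{W}-\boldsymbol\phi^{Y}_{\widetilde g}\big\|_{\infty}$. Sorting renders each signature invariant to the unknown within-group labeling, which suffices here because the lemma asks only to recover the \emph{set} of Group-$1$ traces.

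The argument then turns on two competing length scales. The first is the estimation error: for a fixed user the samples are independent across the time index, so $\bar W_u-\mu_u$ and $\bar Y_u-\mu_{\Pi^{-1}(u)}$ are Gaussian with variances $\sigma^2/l$ and $\sigma^2/m$, and the inter-user correlation (which couples users at a common time, not a user across time) does not inflate these per-trace variances. Fixing a resolution $\epsilon_n$ with $\sigma/\sqrt{l}\ll\epsilon_n$ and $\sigma/\sqrt{m}\ll\epsilon_n$, a Gaussian tail bound gives $\mathbb P(|\bar W_u-\mu_u|>\epsilon_n)\le\exp(-c\,\epsilon_n^{2}l)$, which decays faster than any polynomial because $\epsilon_n^{2}l\to\infty$ for the stated $l=n^{2/s+\alpha'}$; the same holds for $\bar Y_u$ using $m=n^{2/s+\alpha}$. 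A union bound over all $n$ users then makes the event ``every sample mean lies within $\epsilon_n$ of its true mean'' hold with probability tending to $1$, so that both $\boldsymbol\phi^{W}$ and $\boldsymbol\phi^{Y}_{\widetilde g^{\star}}$ sit within $\ell_\infty$-distance $\epsilon_n$ of the sorted true-mean vector $\boldsymbol\mu_1=\mathrm{sort}(\mu_1,\ldots,\mu_s)$.

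The second and decisive scale is an anti-concentration bound ensuring that no \emph{wrong} group has a signature accidentally close to $\boldsymbol\mu_1$, and this is where the hypothesis $f_\mu(x)<\delta$ is indispensable. Because the means are drawn independently across users, the $s$ true means of any group $\widetilde g\neq\widetilde g^{\star}$ are i.i.d.\ $f_\mu$ draws independent of Group~$1$; the probability that their sorted vector falls within $\ell_\infty$-distance $2\epsilon_n$ of $\boldsymbol\mu_1$ is at most $s!\,(4\delta\epsilon_n)^{s}=O(\epsilon_n^{\,s})$, since pinning each of the $s$ coordinates into a window of width $O(\epsilon_n)$ costs a factor $O(\delta\epsilon_n)$ and there are $s!$ orderings. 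A union bound over the $O(n/s)$ candidate groups bounds the probability that any wrong group is $2\epsilon_n$-close by $O(n\,\epsilon_n^{\,s})$; choosing $\epsilon_n=n^{-1/s-\beta}$ with $0<\beta<\tfrac12\min(\alpha,\alpha')$ makes this $O(n^{-s\beta})\to 0$ while keeping $\epsilon_n^{2}l,\epsilon_n^{2}m\to\infty$. This is precisely why the exponent $2/s$ appears: distinguishing $O(n)$ competing groups forces the resolution below the threshold $n^{-1/s}$, i.e.\ $\sqrt{l},\sqrt{m}\gg n^{1/s}$, and the slack $\alpha,\alpha'>0$ supplies exactly the room to do so. I expect this union bound against a growing number of near-miss groups to be the main obstacle, since it is where the interplay among $s$, the sample sizes, and $\delta$ must be balanced.

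Finally I would combine the two scales by the triangle inequality. On the high-probability intersection of the two events above, $\big\|\boldsymbol\phi^{W}-\boldsymbol\phi^{Y}_{\widetilde g^{\star}}\big\|_{\infty}\le 2\epsilon_n$, whereas for every wrong group $\big\|\boldsymbol\phi^{W}-\boldsymbol\phi^{Y}_{\widetilde g}\big\|_{\infty}\ge\mathrm{dist}(\boldsymbol\mu_1,\boldsymbol\mu_{\widetilde g})-2\epsilon_n>2\epsilon_n$, so the minimizer $\widehat g$ coincides with the true image $\widetilde g^{\star}$. As both bad events vanish as $n\to\infty$, the adversary recovers the Group-$1$ traces with arbitrarily small error probability, which establishes the lemma.
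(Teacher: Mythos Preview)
Your proposal is correct and follows essentially the same route as the paper: form per-trace sample means, compare groups via a permutation-invariant $\ell_\infty$ signature, use Gaussian tail bounds for concentration, use the boundedness $f_\mu<\delta$ for anti-concentration at scale $n^{-1/s-\beta}$, and union-bound over the $O(n/s)$ candidate groups before closing with the triangle inequality. The only cosmetic difference is that you sort the mean vectors while the paper minimizes $\|\mathbf U-\mathbf V_{\pi_s}\|_\infty$ over permutations (the paper's $\Delta_n=n^{-1/s-\alpha''/4}$ plays the role of your $\epsilon_n$); note, however, that your final triangle-inequality step needs $\mathrm{dist}(\boldsymbol\mu_1,\boldsymbol\mu_{\widetilde g})>4\epsilon_n$ rather than $2\epsilon_n$, a harmless constant fix that the paper handles by running the anti-concentration step at radius $4\Delta_n$.
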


\begin{proof}
Note that there are at most $\frac{n}{s}$ groups of size $s$ in the system, which we label $1, 2,\cdots,\frac{n}{s}$. 
%Now we claim that if for each user, the adversary obtains $l=n^{\frac{2}{s}+\alpha'}$ points from users' past history, ${\mathbf{W_u}}$, and $m=n^{\frac{2}{s}+\alpha}$ points from the observed data set, ${\mathbf{Y_u}}$, the adversary can identify Group 1 among all other groups.
Define the mean vector for users in Group $1$ as:
$$\mathbf{P}^{(1)} = \left[\mu_1,\mu_2,\cdots,\mu_s\right],$$
and the vectors of empirical averages for the two sets of data which the adversary seeks to match as:
$$\overline{\mathbf{W}}^{(1)} = \left[\overline{W}_1,\overline{W}_2,\cdots,\overline{W}_s\right], \ \ \ \ \ \ \overline{\mathbf{Y}}^{(1)}= \left[\overline{Y}_1,\overline{Y}_2,\cdots,\overline{Y}_s\right],$$ where
$\overline{W}_u = \frac{1}{l} \sum_{i=1}^l W_u(i)$ and $\overline{Y}_u = \frac{1}{m} \sum_{i=1}^m Y_u(i)$.
Let $\Pi_s$ be the set of all permutations on $s$ users; for $\pi_s \in \Pi_s, \pi_s : \left\{1,2,\cdots,s\right\} \rightarrow \left\{1,2,\cdots,s\right\}$ is a one-to-one mapping.  
%Using the definition of the $l^{\infty}$ norm of the difference between two vectors:
%$$||\overline{\mathbf{W}}^{(1)}-\overline{\mathbf{Y}}^{(1)}_{\pi_s}||_\infty = \max \left\{\ \abs{\overline{W}_u-\overline{Y}_{\pi_{s(u)}}} : u = 1,2,\cdots,s\right\},$$
For any two length-$s$ vectors $\mathbf{U}$ and $\mathbf{V}$, we define a difference function that takes into account any permutation of those vectors:
%the minimum over all permutations of the second vector of the $l^{\infty}$ norm of the difference of the two vectors; that is,
 $$D\left(\mathbf{U},\mathbf{V} \right) = \min\limits_{\pi_s \in \Pi_s} \left\{||\mathbf{U}-\mathbf{V}_{\pi_s}||_\infty\right\},$$
where $||\mathbf{U}||_{\infty}=\max_{i=1,2,\ldots,k} U_i$ for length-$k$ vector $\mathbf{U}$.  It is straightforward to show that $D\left(\mathbf{U},\mathbf{V} \right)$ satisfies the triangle inequality, which we will employ below.

Now, defining $\mathbf{P}^{(g)}$, $\overline{\mathbf{W}}^{(g)}$, and $\overline{\mathbf{Y}}^{(g)}$ for groups $g=2,3,\ldots,\frac{n}{s}$ in an analogous way to the definitions of $\mathbf{P}^{(1)}$, $\overline{\mathbf{W}}^{(1)}$, and $\overline{\mathbf{Y}}^{(1)}$, respectively, we claim for $m=n^{\frac{2}{s}+\alpha}$, $l=n^{\frac{2}{s}+\alpha'}$, and as 
$n \to \infty$:
\begin{enumerate}
	\item $\mathbb{P}\left(D\left(\overline{\mathbf{W}}^{(1)},\overline{\mathbf{Y}}^{(1)} \right)\leq \Delta_n\right) \rightarrow 1,$
	\item $\mathbb{P}\left( \bigcup\limits_{g=2}^{\frac{n}{s}}D\left(\overline{\mathbf{W}}^{(1)},\overline{\mathbf{Y}}^{(g)} \right)\leq \Delta_n\right) \rightarrow 0,$
\end{enumerate}
where $\Delta_n=n^{-\frac{1}{s}-\frac{\alpha''}{4}}$, and $\alpha''=\min\{\alpha, \alpha'\}$.
%Now we define
%$\kappa = \max\limits_u\{|x_u-\mu_u|\}$, and
%\[\mathcal{F}^{(n)}=\bigg\{ (x_1, x_2, \cdots , x_s) \in \mathbb{R}^s :\kappa\leq 4\Delta_n, u=1, 2, \cdots, s \bigg \},\]
%\[\mathcal{H}^{(n)}=\bigg\{ (x_1, x_2, \cdots , x_s) \in \mathbb{R}^s :\kappa\leq 3\Delta_n, u=1, 2, \cdots, s \bigg \}.\]
%\[\mathcal{M}^{(n)}=\bigg\{ (x_1, x_2, \cdots , x_s) \in \mathbb{R}^s :\kappa\leq \Delta_n, u=1, 2, \cdots, s \bigg \}.\]
For each $u \in\{1, 2, \cdots, n\}$,
		\begin{align}
		\no \mathbb{P}\left(\ \abs{\overline{X}_u-\overline{W}_u}\geq\Delta_n\right)&=\mathbb{P}\left(\ \abs{\left(\overline{X}_u-\mu_u\right)-\left(\overline{W}_u-\mu_u\right)}\geq\Delta_n\right)\\
		\no &\hspace{-0.55 in} \leq \mathbb{P}\left(\ \abs{\overline{X}_u-\mu_u}+\ \abs{\overline{W}_u-\mu_u}\geq\Delta_n\right)\\
		\no &\hspace{-0.55 in} \leq \mathbb{P}\left(\left\{\ \abs{\overline{X}_u-\mu_u}\geq\frac{\Delta_n}{2}\right\} \bigcup  \left\{\ \abs{\overline{W}_u-\mu_u}\geq\frac{\Delta_n}{2}\right\}\right)\\
\no &\hspace{-0.55 in} \leq \mathbb{P}\left(\ \abs{\overline{X}_u-\mu_u}\geq\frac{\Delta_n}{2}\right)+\mathbb{P}\left(\ \abs{\overline{W}_u-\mu_u}\geq\frac{\Delta_n}{2}\right)\\
 	&\hspace{-0.55 in} \leq e^{\frac{-m\Delta_n^2}{8\sigma^2}} + e^{\frac{-l\Delta_n^2}{8\sigma^2}} \leq 2e^{-\frac{n^{\frac{\alpha''}{2}}}{8\sigma^2}},
		\label{eqYW}
		\end{align}
where $\alpha''=\min \{\alpha, \alpha'\}$. The first inequality follows from the triangle inequality. 
%uses the fact that $|x-y|\leq |x|+|y|$; therefore, $\mathbb{P}(|x-y|\ge\Delta)\leq \mathbb{P}(|x|+|y|\ge\Delta)$. 
The union bound yields the third inequality, and the fourth inequality is based on the error function inequality $\mbox{erf}(x)\geq1-e^{-x^2}$. By employing (\ref{eqYW}) and applying the union bound for all of the users in a group with size $s$, we have for any group $g$ that: 
\begin{align}
\no \mathbb{P}\left(D\left(\overline{\mathbf{W}}^{(g)},\overline{\mathbf{Y}}^{(g)} \right)\geq \Delta_n\right) & \leq \sum_{u=1}^{s} \mathbb{P}\left(\ \abs{\overline{X}_u-\overline{W}_u}\geq\Delta_n\right)\\
\no &= s\mathbb{P}\left(\ \abs{\overline{X}_u-\overline{W}_u}\geq\Delta_n\right)\\
& \leq 2se^{-\frac{n^{\frac{\alpha''}{2}}}{8\sigma^2}}.
\label{delta}
\end{align}
Hence, letting $g=1$, $\mathbb{P}\left(D\left(\overline{\mathbf{W}}^{(1)},\overline{\mathbf{Y}}^{(1)} \right)\leq \Delta_n\right) \rightarrow 1$, as $n \to \infty.$  Next, we want to show that $\mathbb{P}\left( \bigcup\limits_{g=2}^{\frac{n}{s}}D\left(\overline{\mathbf{W}}^{(1)},\overline{\mathbf{Y}}^{(g)} \right)\leq \Delta_n\right) \rightarrow 0$, as $n \to \infty$.  We do this in three steps.
\begin{itemize}[wide=3pt]
\item First, recalling the (mild) technical condition that the probability density function from which the user means is drawn is upper bounded by $\delta$ and that the user means are drawn independently, for Group $g$ we obtain:
		\begin{align}
		\no \mathbb{P}\left( 
||{\mathbf{P}^{(1)}}-{\mathbf{P}^{(g)}}||_\infty 
\leq 4\Delta_n\right) & \leq (8\Delta_n)^s\delta
		 = 8^sn^{-1-\frac{s\alpha''}{4}}\delta.
		\end{align}
		Similarly, for all $\pi_s \in \Pi_s$, we have 
		\begin{align}
		\no \mathbb{P}\left(
||{\mathbf{P}^{(1)}}-{\mathbf{P}^{(g)}}_{\pi_s}||_\infty 
\leq 4\Delta_n\right) \leq 8^sn^{-1-\frac{s\alpha''}{4}}\delta.
		\end{align}
		Employing union bounds, since $|\Pi_s|=s!$, we have
		\begin{align}
		\no & \mathbb{P}\left( \bigcup\limits_{g=2}^{\frac{n}{s}} \left\{D\left({\mathbf{P}^{(g)}_{\pi_s}},{\mathbf{P}^{(1)}} \right)\leq 4\Delta_n\right\} \right)\\
		\no & \hspace{0.5 in} = \mathbb{P}\left( \bigcup\limits_{g=2}^{\frac{n}{s}} \left\{ \bigcup\limits_{\pi_s \in \Pi_s} 
\left \{ ||{\mathbf{P}^{(1)}}-{\mathbf{P}^{(g)}}_{\pi_s}||_\infty \leq 4\Delta_n\right\} 
\right\} \right)\\
		\no &\hspace{0.5 in} \leq \sum_{g=2}^{\frac{n}{s}}\sum_{\pi_s \in \Pi_s}\mathbb{P} \left(
||{\mathbf{P}^{(1)}}-{\mathbf{P}^{(g)}}_{\pi_s}||_\infty \leq 4\Delta_n
\right)\\
		\no &\hspace{0.5 in} \leq \frac{n}{s}s!8^sn^{-1-\frac{s\alpha''}{4}}\delta = (s-1)!8^sn^{-\frac{s\alpha''}{4}}\delta\rightarrow 0,
		\end{align}  
		as $n \to \infty$.  Thus, with high probability, the difference between all $\mathbf{P}^{(g)}$, $g \geq 2$, and $\mathbf{P}^{(1)}$ is bigger than $4 \Delta_n$. 

\item Second, for all $u \in \{2,3,\cdots, n\}$, $\mbox{erf}(x)\geq1-e^{-x^2}$ yields 
		\begin{align}
		\no \mathbb{P}\left(\ \abs{\overline{W}_u-\mu_u}\geq \Delta_n \right) &\leq e^{-\frac{l\Delta_n^2}{2\sigma^2}} \leq e^{-\frac{n^{\frac{\alpha''}{2}}}{2\sigma^2}}.
		\end{align}
		Thus, by employing union bounds, we have
		\begin{align}
		\no \mathbb{P}\left( D\left(\overline{\mathbf{W}}^{(g)},\mathbf{P}^{(g)}\right)\geq \Delta_n \right) 
		& \leq \mathbb{P}\left(\lVert \overline{\mathbf{W}}^{(g)}-{\mathbf{P}^{(g)}} \rVert_\infty \geq \Delta_n \right)  \\
&\leq \sum_{u \in \text{Group } l} \mathbb{P}\left(\ \abs{\overline{W}_u-\mu_u}\geq \Delta_n \right) \nonumber \\
		\no & = se^{-\frac{n^{\frac{\alpha''}{2}}}{2\sigma^2}}. 
		\end{align}
		Now, for $g=1$, as $n \to \infty$, we have
		$$\mathbb{P}\left( D\left(\overline{\mathbf{W}}^{(1)},\mathbf{P}^{(1)}\right)\geq \Delta_n \right) \leq se^{-\frac{n^{\frac{\alpha''}{2}}}{2\sigma^2}} \rightarrow 0.$$

\item Thirdly, since we have shown above that with high probability, $D\left(\mathbf{P}^{(g)},\mathbf{P}^{(1)}\right)\geq 4\Delta_n$ and $D\left(\overline{\mathbf{W}}^{(g)},\mathbf{P}^{(g)} \right)\leq \Delta_n$, for all $l \in \{2,3,\cdots, \frac{n}{s}\}$, by the triangle inequality we have 
		\begin{align}
		\no \mathbb{P}\left(\ D\left(\overline{\mathbf{W}}^{(g)},\overline{\mathbf{W}}^{(1)} \right) \leq 2\Delta_n\right) & \leq \mathbb{P}\left(\ D\left(\overline{\mathbf{W}}^{(g)},\mathbf{P}^{(g)} \right) \geq\Delta_n\right)\\
		\no & \leq se^{-\frac{n^{\frac{\alpha''}{2}}}{2\sigma^2}},\ \
		\end{align}
		and by applying a union bound, as $n \to \infty$, 
		\begin{align}
		\no & \mathbb{P}\left( \bigcup\limits_{g=2}^{\frac{n}{s}} \left\{ D\left(\overline{\mathbf{W}}^{(g)},\overline{\mathbf{W}}^{(1)} \right)\leq 2\Delta_n\right\}\right)\\
		\no &\hspace{1 in} \leq \sum_{g=2}^{\frac{n}{s}} \mathbb{P}\left(\ D\left(\overline{\mathbf{W}}^{(g)},\overline{\mathbf{Y}}^{(1)}\right)\leq 2\Delta_n\right)\\
		\no &\hspace{1 in} =\frac{n}{s}se^{-\frac{n^{\frac{\alpha''}{2}}}{2\sigma^2}} = ne^{-\frac{n^{\frac{\alpha''}{2}}}{2\sigma^2}} \rightarrow 0.
		\end{align}	
\item Finally, since we have shown that, with high probability, $D\left(\overline{\mathbf{W}}^{(g)},\overline{\mathbf{Y}}^{(g)} \right)\leq \Delta_n$ and $D\left(\overline{\mathbf{W}}^{(g)},\overline{\mathbf{W}}^{(1)} \right)\geq2\Delta_n$, for all $g \in \{2,3,\cdots, \frac{n}{s}\}$: 
		\begin{align}
		\no \mathbb{P}\left(\ D\left(\overline{\mathbf{W}}^{(1)},\overline{\mathbf{Y}}^{(g)}\right)\leq \Delta_n\right) & \leq \mathbb{P}\left(\ D\left(\overline{\mathbf{Y}}^{(g)},\overline{\mathbf{W}}^{(g)}\right)\geq\Delta_n\right)\\
		&\leq 2se^{-\frac{n^{\frac{\alpha''}{2}}}{8\sigma^2}} \rightarrow 0,
		\label{eq10}
		\end{align}  
		as $n \to \infty$, where the second inequality follows from ($\ref{delta}$). 
Employing (\ref{eq10}) and a union bound, as $n \to \infty$ we have 
		\begin{align}
		\no & \mathbb{P}\left(\bigcup\limits_{g=2}^{\frac{n}{s}} \left\{ D\left(\overline{\mathbf{W}}^{(1)},\overline{\mathbf{Y}}^{(g)}\right) \leq \Delta_n \right\}\right) \\
		\no &\hspace{1 in} \leq \sum_{g=2}^{\frac{n}{s}} \mathbb{P}\left(\ D\left(\overline{\mathbf{Y}}^{(g)},\overline{\mathbf{W}}^{(g)}\right)\geq\Delta_n\right)\\
		\no &\hspace{1 in} \leq \frac{n}{s}2se^{-\frac{n^{\frac{\alpha''}{2}}}{8\sigma^2}} = 2ne^{-\frac{n^{\frac{\alpha''}{2}}}{8\sigma^2}} \rightarrow 0.
		\end{align}
\end{itemize}
Hence, we can conclude that if $m=n^{\frac{2}{s}+\alpha}$, $l=n^{\frac{2}{s}+\alpha'}$, and $n \to \infty$, the adversary can identify the data traces in the observed data set belonging to users in Group 1 with small error probability.  \end{proof}

Finally, in the following lemma, we show that once the data traces in the observed data set belonging to users in Group 1 are identified, the adversary can identify the data trace in the observed data set for each of the members of Group $1$ with arbitrarily small error probability.

\begin{lem}
	If for any $\alpha, \alpha' >0$, the adversary obtains learning data sets containing $l=n^{\frac{2}{s}+\alpha'}$ data points of past behavior for each user, and observation data sets containing $m=n^{\frac{2}{s}+\alpha}$ data points for each user, and knows which traces in the observation data set belong to members of Group $1$, the adversary can identify the trace in the observation set belonging to user 1 with arbitrarily small error probability.
	\label{lemma3_2}
\end{lem}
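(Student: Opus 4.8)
The plan is to resolve the assignment inside Group~$1$ by nearest-neighbor matching of empirical averages, reusing the concentration bounds already established in the proof of Lemma~\ref{lemma2_2} and supplying one new ingredient: an anti-concentration (separation) bound on the means of the $s$ users inside Group~$1$. Since the set of $s$ observed traces belonging to Group~$1$ is already known, let $\overline{Y}_i$ denote the empirical average of the $i$-th such trace; there is an unknown within-group assignment $\pi^\star \in \Pi_s$ for which the $i$-th trace is the actual trace of user $\pi^\star(i)$, so $\overline{Y}_i$ concentrates around $\mu_{\pi^\star(i)}$. The adversary estimates user~$1$'s trace as the index $i$ minimizing $\abs{\overline{W}_1-\overline{Y}_i}$; equivalently, the permutation attaining $D\left(\overline{\mathbf{W}}^{(1)},\overline{\mathbf{Y}}^{(1)}\right)$ recovers the entire within-group assignment.

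First, I would record the concentration step. Exactly as in~(\ref{eqYW})--(\ref{delta}), for $m=n^{\frac{2}{s}+\alpha}$ and $l=n^{\frac{2}{s}+\alpha'}$ the error-function inequality gives $\mathbb{P}\left(\abs{\overline{W}_u-\mu_u}\geq\Delta_n\right)\leq e^{-\frac{l\Delta_n^2}{2\sigma^2}}$ and $\mathbb{P}\left(\abs{\overline{X}_u-\mu_u}\geq\Delta_n\right)\leq e^{-\frac{m\Delta_n^2}{2\sigma^2}}$, each at most $e^{-\frac{n^{\alpha''/2}}{2\sigma^2}}$ with $\Delta_n=n^{-\frac{1}{s}-\frac{\alpha''}{4}}$ and $\alpha''=\min\{\alpha,\alpha'\}$. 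A union bound over the $s$ users of Group~$1$, a fixed finite number, then shows that with probability tending to $1$ every $\overline{W}_u$ and every Group~$1$ observed average lies within $\Delta_n$ of its true mean $\mu_u$.

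Second, I would establish the separation of the within-group means. Because the $\mu_u$ are drawn independently from a density bounded above by $\delta$, for any fixed pair $u\neq u'$ we have $\mathbb{P}\left(\abs{\mu_u-\mu_{u'}}\leq 4\Delta_n\right)\leq 8\delta\Delta_n$, so a union bound over the $\binom{s}{2}$ pairs inside Group~$1$ gives $\mathbb{P}\left(\min_{u\neq u'}\abs{\mu_u-\mu_{u'}}\leq 4\Delta_n\right)\leq 4s(s-1)\delta\Delta_n\to 0$ as $n\to\infty$, since $\Delta_n\to 0$ while $s$ stays fixed. Hence, with high probability the $s$ within-group means are pairwise separated by more than $4\Delta_n$.

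Finally, I would combine the two events. On their intersection, which has probability tending to $1$, the correct pairing gives $\abs{\overline{W}_1-\overline{Y}_{(\pi^\star)^{-1}(1)}}\leq 2\Delta_n$ because both averages lie within $\Delta_n$ of $\mu_1$, whereas for any index $i$ corresponding to a user $u\neq 1$ the triangle inequality yields $\abs{\overline{W}_1-\overline{Y}_i}\geq\abs{\mu_1-\mu_u}-2\Delta_n>2\Delta_n$. The minimizer is therefore exactly user~$1$'s trace, and the adversary identifies it with error probability vanishing as $n\to\infty$. The one conceptually new ingredient relative to Lemma~\ref{lemma2_2} is this separation bound, which I expect to be the only delicate point; yet it is in fact lighter than the corresponding step of Lemma~\ref{lemma2_2}, because only the fixed number $\binom{s}{2}$ of pairs enters the union bound rather than the growing number $\frac{n}{s}$ of groups, so the same $l$ and $m$ suffice with no further balancing of the exponents $\alpha$ and $\alpha'$.
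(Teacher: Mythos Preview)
Your proposal is correct and follows essentially the same route as the paper: Gaussian concentration of the empirical averages $\overline{W}_u$ and $\overline{X}_u$ around $\mu_u$ at scale $\Delta_n=n^{-\frac{1}{s}-\frac{\alpha''}{4}}$, an anti-concentration bound on the within-group means coming from the boundedness of $f_\mu$, and a triangle-inequality combination showing that the correct match is the unique one within $\Delta_n$ (equivalently, the nearest neighbor). The only cosmetic difference is that the paper separates $\mu_1$ from each $\mu_u$, $u=2,\ldots,s$, whereas you separate all $\binom{s}{2}$ pairs; since $s$ is fixed this changes nothing, and your version in fact recovers the full within-group permutation rather than only user~$1$'s trace.
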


\begin{proof}
We claim that, for $m=n^{\frac{2}{s}+\alpha}$, $l=n^{\frac{2}{s}+\alpha'}$, and as $n \to \infty$,
\begin{enumerate}
	\item $\mathbb{P}\left(\ \abs{\overline{X}_1-\overline{W}_1}\leq \Delta_n\right) \rightarrow 1$,
	\item $\mathbb{P}\left(\bigcup\limits_{u=2}^{s} \abs{ \overline{X}_u-\overline{W}_1}\leq \Delta_n\right) \rightarrow 0$,
\end{enumerate}
where $\Delta_n= n^{-(\frac{1}{s}+\frac{\alpha''}{4})}$, and $\alpha''=\min\{\alpha, \alpha'\}$.

%Recall that according to (\ref{eqYW}), we have
%\begin{align}
%\mathbb{P}\left(\ \abs{\overline{X}_u-\overline{W}_u}\geq\Delta_n\right) \leq 2e^{-\frac{n^{\frac{\alpha''}{2}}}{8\sigma^2}}.
%\label{eq11}
%\end{align}

%The first inequality used the fact that $|x-y|\leq |x|+|y|$; therefore, $\mathbb{P}(|x-y|\ge\Delta)\leq \mathbb{P}(|x|+|y|\ge\Delta)$. The union bound yields the third inequality, and the fourth inequality is based on $\mbox{erf}(x)\geq1-e^{-x^2}$. 

\begin{enumerate}[wide=0pt]

\item The first claim follows from (\ref{eqYW}) with $u=1$ by taking $n \rightarrow \infty$.

\item Next we establish the second claim.  
%We want to show that as $n \to \infty$, $$P\Big( \bigcup\limits_{u=2}^{s} \left\{\ \abs{\overline{X}_u-\overline{W}_1}\ \right\}\leq \Delta_n \Big) \rightarrow 0.$$
Recall the (mild) technical assumption that $f_{\mu}(x) < \delta$ for some $\delta$. Then, for all $u \in \{2,3,\cdots, n\}$, $$\mathbb{P}(|\mu_u-\mu_1|\leq 4\Delta_n) \leq8\Delta_n\delta.$$
A union bound yields
\begin{align}
\no \mathbb{P}\left( \bigcup\limits_{u=2}^{s} \left\{ \ \abs{\mu_u-\mu_1} \leq 4\Delta_n \right \} \right) & \leq \sum_{u=2}^{s} \mathbb{P} \left(\ \abs{\mu_u-\mu_1} \leq 4\Delta_n \right)\\
\no & \leq8s\Delta_n\delta =8sn^{-1-\frac{\alpha''}{4}}\delta \rightarrow 0, 
\end{align}  
as $n \to \infty$. This means that, with high probability, all of the $\mu_u$ for $u > 1$ fall outside of the range of $\mu_1 \pm 4\Delta_n$.

Next, for all $u \in \{2,3,\cdots, n\}$, $\mbox{erf}(x)\geq1-e^{-x^2}$ yields $$\mathbb{P}\left(\ \abs{\overline{W}_u-\mu_u}\geq\Delta_n\right)\leq e^{-\frac{l\Delta_n^2}{2\sigma^2}}.$$ Thus, for $u=1$, as $n \to \infty$, we have $$\mathbb{P}\left(\ \abs{\overline{W}_1-\mu_1} \geq\Delta_n\right)\leq e^{-\frac{n^{\frac{\alpha''}{2}}}{2\sigma^2}}\rightarrow 0,$$ which means $\overline{W}_1$ is inside $\mu_1\pm\Delta_n$ with high probability.
 
Thus, we have now shown with high probability that $|\mu_u-\mu_1|\geq4\Delta_n$ and $|\overline{W}_u-\mu_u|\leq \Delta_n$, for all $u \in \{2,3,\cdots, n\}$; thus, the triangle inequality yields:
\begin{align}
\no \mathbb{P}\left(\ \abs{\overline{W}_u-\overline{W}_1} \leq 2\Delta_n\right) & \leq \mathbb{P}\left(\ \abs{\overline{W}_u-\mu_u} \geq\Delta_n\right)\\
\no & \leq e^{-\frac{l\Delta_n^2}{2\sigma^2}} \leq e^{-\frac{n^{\frac{\alpha''}{2}}}{2\sigma^2}}.
\end{align}
Applying a union bound, as $n \to \infty$, we have 
\begin{align}
\no  \mathbb{P}\left(\bigcup\limits_{u=2}^{s} \left\{ \ \abs{\overline{W}_u-\overline{W}_1}\leq 2\Delta_n\right\}\right) &\leq \sum_{u=2}^{s} \mathbb{P}\left(\ \abs{\overline{W}_u-\overline{W}_1}\leq 2\Delta_n\right)\\
\no &= se^{-\frac{n^{\frac{\alpha''}{2}}}{2\sigma^2}} \rightarrow 0.
\end{align}
Finally, since $|\overline{X}_u-\overline{W}_u|\leq \Delta_n$ and $|\overline{W}_u-\overline{W}_1|\geq2\Delta_n$, for all $u \in \{2,3,\cdots, s\}$, with high probability, we can employ (\ref{eqYW}) to obtain: 
\begin{align}
\no \mathbb{P}\left(\ \abs{\overline{X}_u-\overline{W}_1}\leq \Delta_n\right) &= \mathbb{P}\left(\ \abs{\overline{X}_u-\overline{W}_u}\geq\Delta_n\right)\leq 2e^{-\frac{n^{\frac{\alpha''}{2}}}{8\sigma^2}}.
\end{align}  
As $n \rightarrow \infty$, a union bound yields:
\begin{align}
\no \mathbb{P}\left(\bigcup\limits_{u=2}^{s} \left\{ \ \abs{\overline{X}_u-\overline{W}_1} \leq \Delta_n \right\}\right)& \leq \sum_{u=2}^{s} \mathbb{P}\left(\ \abs{\overline{X}_u-\overline{W}_u}\geq\Delta_n\right)\\
\no &\leq 2se^{-\frac{n^{\frac{\alpha''}{2}}}{8\sigma^2}} \rightarrow 0.
\end{align}

\end{enumerate}
\vspace{-10 pt}
\end{proof}

\vspace{-0.15 pt}

From Lemmas~\ref{lemma1},~\ref{lemma2_2}, and~\ref{lemma3_2}, we can conclude that a user will have no privacy if the number of data points ($m$) per user in the observation data set and the number of data points ($l$) per user in the the learning data set are both significantly larger than $n^{\frac{2}{s}}$ as the number of users in the network $(n)$ goes to infinity, and the size of the group to which the user of interest belongs is equal to $s$.
\begin{thm}
\label{thm_main}
For the system model with Gaussian data points of Section~\ref{frame}, where $\textbf{Y}$ is the anonymized version of $\textbf{X}$, and $\textbf{W}$ is the behavioral history of users, user 1 has no privacy at time $k$ if:
\begin{itemize}
	\item The adversary knows the structure of the association graph;
	\item The adversary has access to a $l-$length behavioral history for each of the users, where $l=\Omega\left(cn^{\frac{2}{s}+\alpha'}\right)$ for any $\alpha > 0$; 
	\item The adversary has access to a $m-$length observation for each of the users, where $m=\Omega\left(cn^{\frac{2}{s}+\alpha}\right)$ for any $\alpha > 0$; 
\end{itemize}
\end{thm}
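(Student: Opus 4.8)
The plan is to chain together the three lemmas already established, controlling the overall error via a single union bound, and then to observe that a correct trace match yields \emph{exact} recovery of $X_1(k)$ rather than mere approximation.

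First I would fix user~1 and let $s$ denote the size of its group in the true association graph $G$. I would introduce three failure events: $\mathcal{E}_1=\{\widetilde{E}\neq E\}$ (the anonymized graph is reconstructed incorrectly), $\mathcal{E}_2$ (the block of observed traces the adversary assigns to Group~1 is not exactly the image under $\Pi$ of the true Group~1), and $\mathcal{E}_3$ (within the correctly identified Group-1 block, the trace matched to $\overline{\mathbf{W}}^{(1)}$ is not $\mathbf{Y}_{\Pi(1)}=\mathbf{X}_1$). By Lemma~\ref{lemma1}, since $m=n^{\frac{2}{s}+\alpha}=n^{\lambda}$ with $\lambda=\frac{2}{s}+\alpha>0$, we have $\mathbb{P}(\mathcal{E}_1)\to 0$; crucially, this reconstruction does not use any knowledge of the user means, so it is compatible with the learning-data-based scheme. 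Having reconstructed $\widetilde{G}$, the adversary recovers the group structure and in particular locates the $s$-vertex block corresponding to user~1's group. Conditioned on $\mathcal{E}_1^{c}$, Lemma~\ref{lemma2_2} gives $\mathbb{P}(\mathcal{E}_2\mid \mathcal{E}_1^{c})\to 0$ for $l=n^{\frac{2}{s}+\alpha'}$ and $m=n^{\frac{2}{s}+\alpha}$, via the permutation-invariant distance $D(\cdot,\cdot)$ on the empirical-mean vectors; and conditioned on $\mathcal{E}_1^{c}\cap\mathcal{E}_2^{c}$, Lemma~\ref{lemma3_2} gives $\mathbb{P}(\mathcal{E}_3\mid \mathcal{E}_1^{c}\cap\mathcal{E}_2^{c})\to 0$, pinning down the single trace $\mathbf{Y}_{\Pi(1)}$.

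Then I would union-bound the total failure event: since each conditional probability tends to zero, so does $\mathbb{P}(\mathcal{E}_1\cup\mathcal{E}_2\cup\mathcal{E}_3)$. On the complementary event the adversary has correctly determined $\Pi(1)$, hence knows that $\mathbf{Y}_{\Pi(1)}=\mathbf{X}_1$ and can simply read off $\widehat{X}_1(k)=Y_{\Pi(1)}(k)=X_1(k)$ for every $k$. Therefore $\mathbb{P}_e(1)=\mathbb{P}(\widehat{X}_1(k)\neq X_1(k))\le \mathbb{P}(\mathcal{E}_1\cup\mathcal{E}_2\cup\mathcal{E}_3)\to 0$, which is exactly the no-privacy condition of the Definition. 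Since the hypotheses guarantee $l=\Omega(cn^{\frac{2}{s}+\alpha'})$ and $m=\Omega(cn^{\frac{2}{s}+\alpha})$, both eventually exceed the thresholds required by Lemmas~\ref{lemma2_2} and~\ref{lemma3_2}, so the conclusion holds under the stated assumptions.

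The main obstacle I expect is the conditioning and bookkeeping rather than any new inequality: one must check that the failure events compose cleanly, i.e.\ that conditioning on correct graph reconstruction and correct group identification does not disturb the independence assumptions underlying the concentration bounds of Lemmas~\ref{lemma2_2} and~\ref{lemma3_2}, and that the \emph{same} threshold exponent $\frac{2}{s}$ governs all three steps so that a single choice of $(\alpha,\alpha')$ suffices simultaneously. A secondary point is translating the Definition's ``estimate $X_u(k)$'' into the matching framework; the key observation resolving this is that a correct permutation match hands the adversary the entire observed trace $\mathbf{Y}_{\Pi(1)}$, which equals $\mathbf{X}_1$ identically, so perfect estimation of $X_1(k)$ follows with no further approximation.
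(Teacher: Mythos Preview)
Your proposal is correct and follows essentially the same approach as the paper: the paper simply states that Theorem~\ref{thm_main} follows by combining Lemmas~\ref{lemma1},~\ref{lemma2_2}, and~\ref{lemma3_2}, and you have spelled out that combination explicitly via the failure events $\mathcal{E}_1,\mathcal{E}_2,\mathcal{E}_3$, the union bound, and the observation that correct identification of $\Pi(1)$ gives exact recovery of $X_1(k)$. If anything, your write-up is more careful than the paper's, which leaves the chaining and the ``perfect estimation'' step implicit.
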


The argument for the case where the adversary has perfect prior knowledge about users' past behavior in the Gaussian case, which is not covered by our prior work, follows from arguments similar to those leading to Theorem~\ref{thm_main} and~\cite[Theorem 1]{WCNC2019}. 

\begin{thm}\label{IndependentCase}
For the system model with Gaussian data points of Section~\ref{frame} where $\textbf{Y}$ is the anonymized version of $\textbf{X}$, user 1 has no privacy at time $k$ if:	\begin{itemize}
		\item The adversary knows the structure of the association graph;
		\item The adversary has access to perfect prior knowledge about users' behavior; 
		\item The adversary has access to a $m-$length observations for each of the users, where $m=\Omega\left(cn^{\frac{2}{s}+\alpha'}\right)$ for any $\alpha' > 0$; 
	\end{itemize}
\end{thm}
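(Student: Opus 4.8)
The plan is to follow the same three-step structure used to establish Theorem~\ref{thm_main} (Lemmas~\ref{lemma1}--\ref{lemma3_2}), specializing each step to the case in which the adversary knows the true means $\mu_u$ exactly rather than estimating them from a finite learning data set. The essential simplification is that every empirical learning average $\overline{W}_u$ is replaced by the exact value $\mu_u$; consequently all probabilistic bounds that previously depended on the learning length $l$ vanish, and the relevant deviation exponent is governed solely by $m$. I would take $\Delta_n = n^{-(1/s + \alpha'/4)}$ throughout. Step one (graph reconstruction) is unchanged: by Lemma~\ref{lemma1}, with $m=n^{\lambda}$ points for any $\lambda>0$ the adversary reconstructs the permuted association graph with high probability, so I may assume Group~1 and its size $s$ are identified.

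Step two (isolating Group~1) mirrors Lemma~\ref{lemma2_2}, now matching the exact mean vector $\mathbf{P}^{(1)}$ against the empirical observation averages $\overline{\mathbf{Y}}^{(g)}$. I would show two facts. First, $\mathbb{P}\left(D(\mathbf{P}^{(1)},\overline{\mathbf{Y}}^{(1)})\leq \Delta_n\right)\to 1$, which follows from the Gaussian concentration $\mathbb{P}(|\overline{X}_u-\mu_u|\geq \Delta_n)\leq e^{-m\Delta_n^2/(2\sigma^2)}$ applied to each component and a union bound over the $s$ members; here $\overline{Y}_u$ is merely a permuted copy of some $\overline{X}_{u'}$, so its marginal concentration about the corresponding mean is identical and is unaffected by the inter-user correlation, since union bounds do not require independence. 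Second, $\mathbb{P}\left(\bigcup_{g=2}^{n/s} D(\mathbf{P}^{(1)},\overline{\mathbf{Y}}^{(g)})\leq \Delta_n\right)\to 0$, which I obtain by first invoking the density bound $f_\mu<\delta$ together with the independence of the means to get $\mathbb{P}\left(\|\mathbf{P}^{(1)}-\mathbf{P}^{(g)}_{\pi_s}\|_\infty\leq 4\Delta_n\right)\leq (8\Delta_n)^s\delta$ over all $s!$ permutations and $n/s$ groups, then combining this separation with the concentration of $\overline{\mathbf{Y}}^{(g)}$ about $\mathbf{P}^{(g)}$ through the triangle inequality satisfied by $D$.

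Step three (isolating user~1 within Group~1) mirrors Lemma~\ref{lemma3_2}, now comparing the scalar $\mu_1$ against $\{\overline{X}_u\}_{u=1}^{s}$. The claim $\mathbb{P}(|\overline{X}_1-\mu_1|\leq \Delta_n)\to 1$ is immediate Gaussian concentration, and the claim $\mathbb{P}\left(\bigcup_{u=2}^{s}|\overline{X}_u-\mu_1|\leq \Delta_n\right)\to 0$ follows because $\mathbb{P}(|\mu_u-\mu_1|\leq 4\Delta_n)\leq 8\Delta_n\delta$ pushes the other means away from $\mu_1$, while $|\overline{X}_u-\mu_u|\leq \Delta_n$ keeps each $\overline{X}_u$ near its own mean; the triangle inequality then separates $\overline{X}_u$ from $\mu_1$ for $u\geq 2$. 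Chaining the three steps shows the adversary recovers the observed trace of user~1, hence can estimate $X_1(k)$, so user~1 has no privacy.

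The step carrying the real content, and the one I would verify most carefully, is the balance in step two between separation and concentration. The union bound there has on the order of $n\cdot s!$ terms, each of size $\sim \Delta_n^s\delta$, so I need $n\Delta_n^s = n^{-s\alpha'/4}\to 0$; simultaneously the concentration term requires $m\Delta_n^2 = n^{\alpha'/2}\to\infty$. With $\Delta_n = n^{-(1/s+\alpha'/4)}$, the threshold $m\sim n^{2/s}$ makes $m\Delta_n^2\sim 1$ exactly, so the hypothesis $m=\Omega\!\left(cn^{2/s+\alpha'}\right)$ supplies precisely the slack needed to drive both error terms to zero. Everything else is routine bookkeeping once the learning-data terms are removed; indeed, the main conceptual observation is simply that perfect knowledge of the means strictly dominates any finite learning set, so the constraint on $l$ present in Theorem~\ref{thm_main} drops out and only the observation-length condition on $m$ survives.
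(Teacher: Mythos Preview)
Your proposal is correct and follows essentially the same approach the paper indicates: the paper does not give a standalone proof of Theorem~\ref{IndependentCase} but merely remarks that it follows from arguments similar to those leading to Theorem~\ref{thm_main} (and \cite[Theorem~1]{WCNC2019}), and your three-step outline---replacing each $\overline{W}_u$ by the exact $\mu_u$ and dropping the $l$-dependent terms---is precisely that specialization. Your balance check on $\Delta_n$ and the observation that only the $m$-constraint survives are exactly the point.
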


%\hspace{-0.14 in}\textbf{Discussion}:

%{\hspace{-0.18 in}\textbf{Discussion:}}

%%%%%%%%%%%%%%%%%%%%%%%%%%%%%%%%%%%%%%%%%%%%%%%%%%%%%

%%%%%%%%%%%%%%%%%%%%%%%%%%%%%%%%%%%%%

%%%%%%%%%%%%%%%%%%%%%%%%%%%%%%%
%\input{simulation}
%%%%%%%%%%%%%%%%%%%%%%%%%%

\section{Conclusion}
\label{conclusion}

IoT devices provide significant convenience for users, but they can allow an adversary to obtain a user's sensitive information.  In this paper, given that anonymization is employed to ensure users' privacy, we consider the broadest set of assumptions compared to previous work: (i) we assume the adversary only has access to limited data sets for users' past behavior rather than perfect knowledge of the statistics of users' behavior; (ii) we assume the data traces of different users are dependent; (iii) we assume that users' data sequences are governed by an i.i.d.\ Gaussian model.  We established sufficient conditions for an adversary to reconstruct the association graph that represents the dependency between users, identify a specific group of dependent users, and determine all of the members of the identified group, hence breaking the privacy of individual users.  In particular, if the length ($l$) of the learning data set and the length ($m$) of the observed data set are each significantly larger than $n^{\frac{2}{s}}$, users have no privacy.

\bibliographystyle{IEEEtran}
\bibliography{REF}

% that's all folks

\end{document}